

\documentclass[12pt]{article}




\usepackage {graphicx}
\usepackage [latin1]{inputenc}
\usepackage {bm}
\usepackage {amsmath, amssymb}
\usepackage {listings}
\usepackage {tikz}
\usepackage {xspace}
\usepackage[a4paper,left=2cm,right=2cm,top=2cm,bottom=2.5cm]{geometry}
\usepackage{ifthen}
\usetikzlibrary{shapes,calc}
\usepgflibrary{arrows}

\tikzstyle{vertex}=[circle,fill=black!25,minimum size=24pt,inner sep=0pt]
\tikzstyle{node}=[circle,thick,draw=blue!75,minimum size=24pt]
\tikzstyle{nodeblack}=[circle,thick,draw=black,minimum size=24pt]
\tikzstyle{nudenode}=[]
\tikzstyle{nodegray}=[circle,thick,draw=blue!75,fill=black!10,minimum size=24pt]
\tikzstyle{diredge}=[line width=1pt]
\tikzstyle{edge}=[line width=1pt]
\tikzstyle{edgeb}=[line width=2pt]
\tikzstyle{edgebb}=[line width=4pt]
\tikzstyle{bendedge}=[line width=1pt,bend left=15]

\tikzstyle{roundRect}=[rounded corners, line width = 1pt]
\tikzstyle{roundNode}=[rectangle,thick,rounded corners,draw=blue!75,minimum size=24pt]



\newcommand{\edgebblbl}[3]{\draw [-,edgebb] (#1)to node[auto] {#3} (#2);}

\newtheorem{theorem}{Theorem}[section]

\newtheorem{problem}[theorem]{Problem}
\newenvironment{proof}[1][Proof]{\begin{trivlist}
\item[\hskip \labelsep {\bfseries #1}]}{\hfill q.e.d.\end{trivlist} }
\newtheorem{corollary}[theorem]{Corollary}

\newcommand{\oneinthreesat}{Monotone 1-in-3-SAT\xspace}
\newcommand{\ie}{i.\,e.\xspace}
\newcommand{\eg}{e.\,g.\xspace}

\newcommand{\refeq}[1]{(\ref{#1})}

\newcommand{\eq}[1]{``#1''}
\newcommand{\norm}[1]{\left\vert#1\right\vert}

\tikzstyle{whitecirc}=[circle,fill=white,draw=black,minimum size=24pt]

\newcommand{\colcirc}[5]{
\node(#5) [circle,minimum size=#3,fill=white,draw=black] at (#1,#2) {#4};}

\newcommand{\piece}[7]{\coordinate (LL) at (#1,#2); \coordinate (LR) at ($(#1,#2)+(#3,0)$);
\coordinate (UR) at ($(#1,#2)+(#3,#3)$); \coordinate (UL) at ($(#1,#2)+(0,#3)$);
\draw[auto] (LR) to node {#4} (UR);
\draw[auto] (UR) to node {#5} (UL);
\draw[auto] (UL) to node {#6} (LL);
\draw[auto] (LL) to node {#7} (LR);
\draw (LL) -- (UR); \draw (UL) -- (LR);
}

\newcommand{\piecename}[8]{\coordinate (LL) at (#1,#2); \coordinate (LR) at ($(#1,#2)+(#3,0)$);
\coordinate (UR) at ($(#1,#2)+(#3,#3)$); \coordinate (UL) at ($(#1,#2)+(0,#3)$);
\draw[auto] (LR) to node {#4} (UR);
\draw[auto] (UR) to node {#5} (UL);
\draw[auto] (UL) to node {#6} (LL);
\draw[auto] (LL) to node {#7} (LR);
\draw (LL) -- (UR); \draw (UL) -- (LR);
\node[whitecirc] at ($0.5*(LL)+0.5*(UR) $) {#8};
}

\newcommand{\brokenpiece}[4]{\coordinate (LL) at (#1,#2); \coordinate (LR) at ($(#1,#2)+(#3,0)$);
\coordinate (UR) at ($(#1,#2)+(#3,#3)$); \coordinate (UL) at ($(#1,#2)+(0,#3)$);
\draw[auto] (UL) -- (LL);
\draw[auto] (LL) -- ++(#3/3, 0) -- ++(#3/4, #3/4) -- ++(- #3/4, #3/4) -- ++(#3/4, #3/4) -- ++(- #3/4, #3/4) -- (UL);
\ifthenelse{\equal{#4}{true}}{
  \draw[auto] (UR) -- (LR);
  \draw[auto] (LR) -- ++(- #3/2, 0) -- ++(#3/4, #3/4) -- ++(- #3/4, #3/4) -- ++(#3/4, #3/4) -- ++(- #3/4, #3/4) -- (UR);
}{}
}

\newcommand{\piececolor}[7]{\coordinate (LL) at (#1,#2); \coordinate (LR) at ($(#1,#2)+(#3,0)$);
\coordinate (UR) at ($(#1,#2)+(#3,#3)$); \coordinate (UL) at ($(#1,#2)+(0,#3)$);
\coordinate (CE) at ($0.5*(LL)+0.5*(UR)$);
\draw[fill=#4] (LR)-- (CE)-- (UR) -- (LR);
\draw[fill=#5] (UR)-- (CE)-- (UL) -- (UR);
\draw[fill=#6] (UL)-- (CE)-- (LL) -- (UL);
\draw[fill=#7] (LL)-- (CE)-- (LR) -- (LL);
\draw (LL) -- (LR) -- (UR) --(UL) -- (LL);
\draw (LL) -- (UR); \draw (UL) -- (LR);
}

\newcommand{\settext}[3]{
\node at (#1,#2) {#3};}

\usepackage{amssymb}


\begin{document}

\title{ Edge-matching Problems with Rotations\footnote{A preliminary version
of this paper appeared in Proceedings of
FCT 2011,~\cite{DBLP:conf/fct/EbbesenFW11}.}}

\author{Martin Ebbesen\\Appstract Consulting\\ DK-1300 K\o benhavn K \\ \texttt{martin@appstract.dk}
\and
Paul Fischer\\DTU Compute\\ Technical University of Denmark\\ DK-2800 Lyngby\\ \texttt{pafi@dtu.dk}
\and
Carsten Witt\\DTU Compute\\ Technical University of Denmark\\ DK-2800 Lyngby\\ \texttt{cawi@dtu.dk}}
\maketitle

\begin{abstract}
Edge-matching problems, also called edge matching puzzles,  are abstractions  of placement problems
with neighborhood conditions. Pieces with colored edges have to be placed
on a board such that adjacent edges have the same color.
The problem has gained interest recently with the (now terminated) Eternity~II puzzle,
and new complexity results.
In this paper we consider a number of settings which differ in size of the puzzles
and the manipulations allowed on the pieces. We investigate the effect of allowing
rotations of the pieces on the complexity of the problem, an aspect
that is only marginally treated so far.
We show that some problems have
polynomial time algorithms while others are NP-complete.
Especially we show that allowing rotations in one-row puzzles
makes the problem NP-hard. The proofs of the hardness result
uses a large number of colors.  This is essential because we also show that this problem (and another
related one) is
fixed-parameter tractable\footnote{An NP-complete problem is fixed-parameter tractabel
if there is algorithms which is exponential in only one parameter which specifies the
problem size an polynomial in the size of the input.
When this parameter is constant (or ``small''), then the problem is efficiently
solvable.}, where the relevant parameter is the number of colors.
\end{abstract}

\textbf{Keywords:}
Edge-matching puzzles, complexity, fixed-parameter tractability

\section{Introduction}
We consider the following combinatorial optimization problem. Let $N$ and $M$
be positive integers.
The puzzles considered in this paper consist of $NM$ quadratic \emph{pieces} whose
edges are colored. Let $c_0,c_1,\ldots,c_K$ denote the colors.
The pieces are placed in the cells of a rectangular $N\times M$ grid.
The edges of a piece are denoted
\emph{right}, \emph{top}, \emph{left}, \emph{bottom}  in the obvious way.
A piece $P$ can then be specified by its position $(i,j)$ on the grid
and the colors $c_{r,P},c_{t,P},c_{l,P},c_{b,P}$ on its
right, top, left, and bottom edge, in this order.  The image below shows a
piece specified by the color pattern  $(c_1,c_2,c_3,c_4)$.
\begin{center}
    \begin{tikzpicture}
       \piece{1}{1}{1.5}{$c_1$}{$c_2$}{$c_3$}{$c_4$}
    \end{tikzpicture}

\end{center}

Two pieces are \emph{neighbors} if they are placed in the same column
and the difference
of their row  coordinates in the grid is $1$ (or with the roles of
rows and columns interchanged), i.e., they share an edge.
Given two pieces $P$ and $P'$ on neighboring positions, say at $(i,j)$
and $(i,j+1)$ respectively, we say that they \emph{match}, if the adjacent edges
have the same color, that is $c_{r,P}=c_{l,P'}$. In this case we also say that
the edge they have in common \emph{matches}.
Similarly, if they are
vertically adjacent, say at $(i,j)$ and $(i+1,j)$, they \emph{match}, if
$c_{b,P}=c_{t,P'}$, see example below.

\begin{center}
\begin{tikzpicture}
\piecename{1}{2}{2}{$c_1$}{$c_2$}{$c_3$}{$c_4$}{$P$}
\piecename{3}{2}{2}{$c_5$}{$c_6$}{$c_1$}{$c_7$}{$P'$}

\piecename{6}{3}{2}{$c_1$}{$c_2$}{$c_3$}{$c_4$}{$P$}
\piecename{6}{1}{2}{$c_5$}{$c_4$}{$c_6$}{$c_7$}{$P'$}
\end{tikzpicture}
\end{center}

We say that a board is \emph{solved} if the pieces are placed in such a way that
all edges match. At this point one has to specify rules for the \eq{border edges},
that is, the edges facing the border of the board. We consider three cases here
\begin{description}
  \item[Free] There is no restriction, any color matches the border of the board.
  \item[Monochrome] There is a single color for all border edges.
  \item[Cyclic] We assume that the board actually is a torus, that is, the
  top edges are aligned with the bottom ones and the left edges are aligned
  with the right ones.
\end{description}
An arrangement of the pieces which solves the board is called a \emph{solution}.

The manipulations allowed in a board are:
\begin{description}
  \item[Swap] Two pieces $P$ and $P'$ interchange their position, without
  being rotated.
  \item[Rotate] A piece $P$ is rotated counter-clockwise
  in-place by $0\deg$, $90\deg$, $180\deg$, or $270\deg$.
  \item[Edge permutation] The colors on the edges are permuted in one of the
  $24$ possible ways. The position of the piece is unchanged. Note that this
  manipulation includes rotations.
  \item[Flip] The colors of the left and right or up and down edges of piece $P$
  are interchanged (that is $P = (c_1,c_2,c_3,c_4)$ becomes $P_{flipped} = (c_3,c_2,c_1,c_4)$ or $P = (c_1,c_2,c_3,c_4)$
  becomes $P_{flipped} = (c_1,c_4,c_3,c_2)$).
  Flipping is an in-place operation.
\end{description}
Combinations of the manipulations are possible.

The \emph{edge matching problem} is then formulated as follows.
\begin{problem}
Given is a $N\times M$ board, $K$ colors, $NM$ pieces, a border rule and a set of
manipulations. The decision problem is given by the question ``is it possible to solve the board?''

In the optimization version of the problem, a solution has to be produced.
\end{problem}

Edge-matching has found  applications in biology
where it is used in a method for DNA fragment assembly~\cite{ptw}.
The problem has also gained interest recently with the Eternity II puzzle, a $16\times 16$-puzzle,
with a \$2 million prize for a solution (which was not found, though).
Another area where this kind of problems
appears is  chip-layout, where interfaces have to be placed on a rectangular chip but their order
is arbitrary.

\subsection{Previous Work}

Edge-matching with one row, swaps and no rotation corresponds to domino tiling \cite{LesniakOrtrud} and is,
as we discuss in this paper (and as covered in \cite{LesniakOrtrud}),
equivalent to the problem of finding an Eulerian path in a multi-graph.
The computational complexity of edge-matching and related problems has been studied
for several decades. Early results show that the more general tiling problem, \ie,
solving the edge-matching problem
with swaps for a quadratic board
using only a subset of a given set of
un-rotatable pieces is NP-complete \cite{GareyJohnson}.
Goles and Rapaport showed in~\cite{gr97} that the edge-matching problem with only
in-place rotations (disallowing swaps)
and free border rule is NP-complete.
More recently, Demaine and Demaine \cite{DemaineJigsaw} proved that the edge-matching variant
considered here
 is  NP-complete for quadratic boards with swaps and rotations. In 2010,
Antoniadis and Lingas showed that this problem is even APX-hard, \ie, hard
to approximate, already for boards with at least two rows \cite{AntoniadisL10}.
\subsection{Overview of the Paper}

Most of the above-mentioned analyses  of the edge-matching problem
consider swaps, but
do not allow rotating pieces. The NP-hardness proof in \cite{DemaineJigsaw}, even though
it formally allowed
rotations, forces the pieces to be used in a fixed orientation.
Only
recently, the APX-hardness proof in \cite{AntoniadisL10} explicitly made use of
rotation and swaps at the same time.
In this paper we address the question of what changes
in the complexity of the problem occur when rotations of pieces are allowed.

In Section~\ref{sec:one-row} we consider puzzles with only one row ($1\times M$).
For these puzzles there is a known correspondence to Euler paths, which we
describe along with some previous results.
We then show that single-row puzzles where only in-place rotations are allowed
can be efficiently solved. In contrast we show that solving single-row
puzzles with rotations and swaps allowed is an NP-hard problem. The proofs
implicitly use the Euler path formulation of the problem.

In Section~\ref{sec:multirow} we strengthen a result of~\cite{DemaineJigsaw}  by showing
that already boards with two rows with swaps only are NP-hard to solve.

Having observed that the hardness proofs use a number of colors which is proportional
to the number of pieces in the puzzle, we show in Section~\ref{sec:constColor} that
the problem of solving a single-row puzzle with rotations and swaps is efficiently
solvable when the number of colors is constant or logarithmic in the number of pieces.
Thus this problem is fixed-parameter tractable. Similarly, an efficient solution
is also possible for the case
of constant board size and number of colors.

\section{Boards with One Row \label{sec:one-row}}
Edge-matching with one row ($N=1$) without rotation is equivalent to
the problem of finding an Eulerian trail in a multi-graph allowing
loops. An Eulerian trail is a trail that visits each edge of the
graph exactly once, and the concept is applicable to both directed
and undirected graphs: An undirected graph is Eulerian (i.e.
contains an Eulerian trail) iff it is connected and has no more than
2 vertices with an uneven number of edges. An Eulerian circuit (a
trail starting and ending on the same vertex) requires that all
vertices must have an even number of edges. A directed graph has an
Eulerian circuit iff it is weakly connected and all vertices has
equal in- and out-degree, while a path requires connectedness and 0
or 2 vertices with difference in in- and out-degree equal to~1. The
above facts are shown in, \eg, \cite{LesniakOrtrud}.

\subsubsection*{Swaps and flipping}
An edge-matching instance with flipping, with $M$ pieces and $K$
colors is transformed to an undirected multigraph containing $K$ vertices
and $M$ edges. Each piece corresponds to an edge connecting the vertices
corresponding to the colors on opposing edges of the piece.
Now traversing a vertex using two different
edges corresponds to matching two pieces having a common color. A
trail in the graph corresponds to a matched chain of pieces, and an
Eulerian trail corresponds to a solution where all pieces are
fitted, and vice versa.  Hence this variant of the problem is efficiently
solvable.
Figure~\ref{swapflip:fig} shows an example, see \cite{EbbesenMaster} for more details.
\begin{figure}[h]
\begin{center}
\begin{tikzpicture}
\piece{0}{2}{1.4}{1}{0}{2}{0}
\piece{0}{0}{1.4}{1}{0}{3}{0}
\piece{2}{0}{1.4}{4}{0}{3}{0}
\piece{2}{2}{1.4}{4}{0}{3}{0}

\colcirc{5}{3}{1cm}{1}{C1}
\colcirc{7.5}{3}{1cm}{3}{C3}
\colcirc{7.5}{0.5}{1cm}{2}{C2}
\colcirc{5}{0.5}{1cm}{4}{C4}

\draw[thick] (C1) -- (C2);
\draw[thick] (C1) -- (C3);
\draw[thick] (C4) to [bend right = 10](C3);
\draw[thick] (C4) to [bend left = 10] (C3);

  \piece{9}{2}{1.2}{4}{0}{3}{0}
  \piece{10.2}{2}{1.2}{3}{0}{4}{0}
  \piece{11.4}{2}{1.2}{1}{0}{3}{0}
  \piece{12.6}{2}{1.2}{2}{0}{1}{0}

    \end{tikzpicture}
\end{center}
\caption{An example for a single-row puzzle with swaps and flips. The colors
of the vertical edges are denoted $1$, $2$, $3$, and $4$; the irrelevant color
at the top and bottom is $0$. The multi-graph has a one node for each relevant
color and one edge for every piece. The edge connects the nodes corresponding
to the colors of the vertical edges. The graph has an Euler trail, $2,1,3,4,3$
and thus the puzzle is solvable. The corresponding solution is shown at right.}
\label{swapflip:fig}
\end{figure}
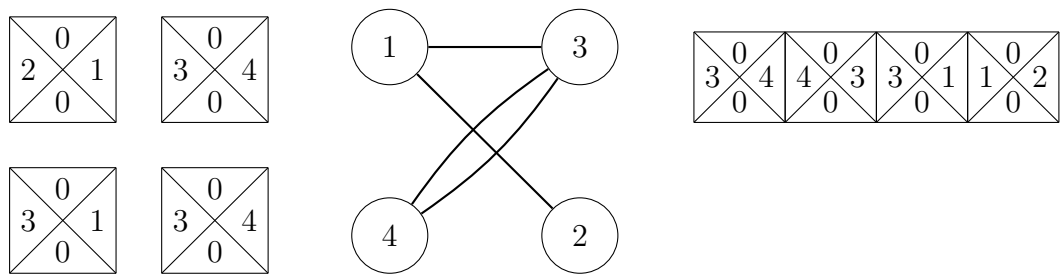

\subsubsection*{Swaps only}
Edge-matching with swaps (no flipping) correspond to Eulerian cycles in directed multigraphs, and vice versa.
Hence this variant of the problem is efficiently solvable.

\subsubsection*{Rotations only}\label{oneDRot:sec}
With free border rule, if the board consists of only one piece it will
 always be solved.
 Obviously, for a board with two pieces there will be a solution if the
 two pieces can be rotated such that their touching edges match.
 This can be generalized:

\begin{theorem}
Single-row  edge matching puzzles with in-place rotations can be solved or determined to
be unsolvable by an algorithm that has time complexity linear in the number
of pieces.
\end{theorem}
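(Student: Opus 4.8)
The plan is to reduce single-row puzzles with in-place rotations to a graph problem analogous to the Eulerian-trail characterization already recalled for the swap variants, but where the graph must be built more carefully because rotation lets us choose which pair of opposing edges of a piece becomes the ``horizontal'' pair, and also lets us swap the two colors within that pair. First I would observe that with only in-place rotations the pieces stay in their given cells, so the $i$-th piece from the left must match the $(i{+}1)$-st; the vertical (top/bottom) colors never matter under the free border rule. Hence, after choosing a rotation for piece $P=(c_1,c_2,c_3,c_4)$, the only data that matters is the \emph{ordered} pair of colors on its left and right edges, which ranges over the set $\{(c_1,c_3),(c_3,c_1),(c_2,c_4),(c_4,c_2)\}$ (the four rotations). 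So each piece is an undirected edge of a multigraph on the color set, and a rotation of the piece corresponds to orienting that edge; a solved board corresponds exactly to an orientation of all $M$ edges together with an ordering of them into a single trail $v_0 \to v_1 \to \cdots \to v_M$ consistent with the orientations — i.e., to an Eulerian trail of the underlying undirected multigraph.

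The main steps are then: (1) Build the undirected multigraph $G$ with a vertex for every color that appears on some horizontal-candidate edge and, for each piece $(c_1,c_2,c_3,c_4)$, two parallel edges, one joining $c_1$ and $c_3$ and one joining $c_2$ and $c_4$ — \emph{wait}, here I must be slightly careful: a rotation uses the \emph{same} physical left/right axis, so a single piece contributes a choice between the edge $\{c_1,c_3\}$ and the edge $\{c_2,c_4\}$, not both simultaneously. So $G$ is really a graph where each piece is a ``gadget'' offering one of two possible edges. (2) Reduce this to ordinary Eulerian-trail testing: note that if $\{c_1,c_3\}$ and $\{c_2,c_4\}$ are genuinely different options we have a choice, but since we only need existence of \emph{some} selection admitting an Eulerian trail, we can handle it by the standard parity/connectivity criterion quoted from \cite{LesniakOrtrud}, scanning over the (at most constant) number of relevant degree-parity configurations. (3) Check connectivity of the chosen subgraph and that at most two vertices have odd degree; if so, construct the trail greedily (Hierholzer's algorithm), which runs in time linear in $M$; if not for any admissible selection, report unsolvable.

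The hard part will be step (2): correctly accounting for the per-piece binary choice between its two opposite-edge pairs while keeping the running time linear. Two simplifications rescue this. First, if for a piece the two options coincide as unordered pairs (e.g. $c_1=c_2$ and $c_3=c_4$, or $c_1=c_4$ and $c_2=c_3$) there is no real choice. Second — and this is the key observation — in a \emph{single-row} board the pieces are also free to be permuted only if swaps were allowed, which they are not here; so the trail is forced to visit the pieces in their fixed left-to-right order, and the ``edge'' each piece contributes must simply match its neighbors. This means we do not need an Eulerian trail in full generality: we process the pieces left to right, maintaining the current right-edge color, and at each step check whether the next piece has a rotation whose left edge equals it; this is an $O(1)$ check per piece, giving overall linear time, and it trivially also decides unsolvability. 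I would present the argument via this direct left-to-right sweep (mentioning the Eulerian-trail correspondence as the conceptual backbone), since it is the cleanest route to the claimed linear bound.
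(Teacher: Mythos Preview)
Your final left-to-right sweep is the right idea and is exactly what the paper does, but as you have phrased it there is a genuine gap: you propose ``maintaining the current right-edge color'' (singular) and checking whether the next piece has a rotation whose left edge equals \emph{it}. This greedy commitment can fail on solvable instances. For example, take $p_0=(1,1,1,1)$, $p_1=(1,1,2,3)$, $p_2=(3,4,4,4)$. After $p_0$ the right color is $1$; piece $p_1$ has two rotations with left color~$1$, yielding right color~$2$ or right color~$3$. If your sweep picks~$2$, then $p_2$ (whose left edge can only be $3$ or $4$) fails and you incorrectly report unsolvable, whereas picking~$3$ succeeds. With up to four rotations per piece, na\"ive backtracking is exponential, so this is not a mere presentational slip.

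The fix is to maintain the \emph{set} of feasible right-edge colors at each position, which has size at most~$4$; this is a one-line dynamic program and is precisely the paper's argument. Define $R(p_i)$ to be the set of colors that can appear on the right edge of $p_i$ in some rotation that is consistent with a full match of $p_0,\dots,p_i$; then $R(p_0)$ is the set of colors on $p_0$, and $R(p_{i+1})$ is obtained by taking each color in $R(p_i)$ that also appears on $p_{i+1}$ and recording the color on the opposite edge. Each step is $O(1)$, giving the linear bound. The Eulerian-trail discussion in your first two paragraphs is a red herring here (you yourself note that without swaps the edges must be traversed in the fixed board order, which is not an Eulerian condition at all), and should simply be dropped.
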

\begin{proof}
The proof is by induction. The pieces are numbered from
left to right as $p_{0},p_{1},...,p_{M-1}$. $L(p_{i})$ is the set
of colors that can be on the left edge of piece $p_{i}$ which is
equivalent to the set of unique colors on the 4 edges of $p_{i}$.
$R(p_{i})$ is the set of colors that can be on the right edge of
$p_{i}$ such that $p_{i}$ fits $p_{i-1}$, that is, such that $L(p_{i})\cap R(p_{i-1})\neq \emptyset$.
$L(p_{i})$ and $R(p_{i})$ can both have at most 4 members since
a piece has 4 edges.

\emph{Base case}: For $p_{0}$ there is no constraint, as one piece
always represents a solution to a $1\times 1$ board: $R(p_{0})=L(p_{0})$.

\emph{Induction step}: Given that $R(p_{i})$ is known and the board
is solvable up to $p_{i}$ then piece $p_{i+1}$ can be fitted if
$R(p_{i})\cap L(p_{i+1})\neq \emptyset$. If not, the board must
be unsolvable. $R(p_{i+1})$ can be calculated by finding the color
on the opposite edge of $p_{i+1}$ for each member in $R(p_{i})\cap L(p_{i+1})$.
This operation takes constant time because of the bound on the sizes
of the two sets.

The theorem holds because each step in the induction, \ie, each additional
piece, only adds a constant time overhead.
\end{proof}

\subsubsection*{Swaps and Rotations}
\begin{theorem}\label{swaprot:thm}
If both swaps and rotations are allowed, single-row edge-matching
with free border rule is NP-complete.
\end{theorem}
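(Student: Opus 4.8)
The plan is to reduce from a known NP-complete problem and encode its instances as single-row puzzles where the freedom to swap and rotate pieces exactly captures the combinatorial choices of the source problem. Membership in NP is immediate: a proposed arrangement (a permutation of the pieces together with a rotation for each) can be verified in linear time by checking that every pair of adjacent edges agrees in color. So the work is entirely in the hardness direction.

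For the reduction itself, I would exploit the Eulerian-trail viewpoint that the paper has already set up for one-row puzzles. With swaps only, a solvable puzzle corresponds to an Eulerian trail in a directed multigraph (one vertex per color, one arc per piece, oriented from left-color to right-color); allowing rotations means that for each piece we may, instead of using the fixed arc $(c_l,c_r)$, choose \emph{any} of the ordered pairs obtainable by rotation — i.e. we get to choose an orientation and in fact which opposite pair of the four edges is ``horizontal''. So the target structure is: given a collection of pieces, can we orient/choose-a-pair for each so that the resulting directed multigraph has an Eulerian trail? This is a much more flexible object than plain Eulerian trail, and the natural source problem to encode is something like \oneinthreesat (Monotone 1-in-3-SAT), which the paper has defined a macro for — suggesting that is indeed the intended route. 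The idea would be to build, for each variable, a small ``gadget'' of pieces whose only freedom (after the global matching constraints pin everything else down) is a binary rotate-or-not choice representing the truth value, and for each clause a gadget that forces exactly one of its three incident variable-wires to be ``on.'' Swaps provide the ability to route wires (place pieces in the right order along the single row), while rotations provide the truth-value bit; the large color palette is what lets us keep all these gadgets from interfering, each wire and each junction getting its own private colors.

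Concretely I would: (1) introduce, for each variable $x$, a pair of ``polarity'' colors and a chain of pieces realizing a wire that can be traversed in one of two consistent ways depending on whether the variable-piece is rotated; (2) for each clause, build a junction of pieces so that the three wire-ends entering it can be simultaneously matched if and only if exactly one wire arrives in its ``true'' state, with the other two forced into a ``false'' sink; (3) add connector/filler pieces with unique colors so that the entire set of $M$ pieces has an Eulerian-trail arrangement in a single row precisely when the wires and junctions are mutually consistent, i.e. precisely when the 1-in-3-SAT instance is satisfiable; (4) check that the construction is polynomial in the instance size (number of pieces and colors both linear in $n+m$) and that any solved board decodes back to a satisfying assignment. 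Finally, one must confirm that the free border rule causes no trouble — since the ends of the row abut the border, which matches everything, the two ``loose ends'' of the Eulerian trail are unconstrained, so we should make the trail's natural endpoints be dedicated terminal colors that only the end-filler pieces carry.

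The main obstacle, I expect, is designing the clause gadget: with rotations available, a single piece has $4$ (or with the richer edge-permutations, more) possible horizontal color-pairs, so it is easy to accidentally give the solver extra freedom that breaks the ``exactly one'' semantics. The delicate part is showing the gadget is \emph{rigid} — that in every solved board the non-choice pieces are forced into a unique role, so that the only surviving degrees of freedom are the intended per-variable bits — and simultaneously \emph{complete} — that every 1-in-3 satisfying assignment really does extend to a full Eulerian-trail arrangement. Verifying rigidity typically amounts to a careful case analysis of which colors can meet which, leaning on the fact that most colors are used on only two edges in the whole instance, which forces those two edges to be adjacent in any solution. Getting this case analysis clean, while keeping the color count polynomial, is where the real effort goes.
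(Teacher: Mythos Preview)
Your plan matches the paper's proof: the reduction is indeed from \oneinthreesat, the key mechanism is exactly that rotating a piece in a single row with free top/bottom border selects which opposite pair of edge-colors is ``active'' (must be matched) versus ``deactivated'' (faces the border), and rigidity is enforced by making most auxiliary colors appear on exactly two edges in the whole instance. For the clause gadget you flag as the obstacle, the paper's device is a parity argument rather than a routing junction: a monochrome piece $S_j=(s_j,s_j,s_j,s_j)$ contributes two active copies of~$s_j$ regardless of rotation, and $s_j$ otherwise appears once on each of three ``value'' pieces $V_{j,1},V_{j,2},V_{j,3}$; since every active color must occur an even number of times in a solved row, exactly two of the three value pieces must activate~$s_j$ (the ``false'' rotation) and one must deactivate it (``true''), which yields the 1-in-3 semantics directly, with separate ``accordance'' pieces chaining occurrences of the same variable to enforce consistency.
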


\begin{proof}
We use a polynomial-time reduction from the
NP-complete \oneinthreesat problem \cite{Schaefer:1978:CSP:800133.804350}:
given $m$ monotone
(\ie, disallowing negation) clauses
$c_1,\dots,c_m$ over $n$ variables $x_1,\dots,x_n$, the question is whether there
exists an assignment which satisfies exactly one variable in each clause.

Let $m_i$, $i=1,\dots,n$, denote the number of occurrences of variable $i$, \ie,
$m_1+\dots+m_n=3m$. The instance of \oneinthreesat is mapped to a single-row
board with free border at the top and bottom and cyclic border to the left and right (which can
be simulated by a free border using a polynomial number of extra pieces).
There are  $M=10m$ pieces and $K=13m-n+1$ colors. We denote colors by lower-case and pieces
by upper-case letters. The set of pieces
 consists of all $V_{j,q}$, $V'_{j,q}$, $S_j$,
where $j=1,\dots,m$ and $q=1,2,3$, and $A_{i,k}$, where $i=1,\dots,n$ and
$k=1,\dots,m_i$. The colors are $\ell$, $t_{j,q}$,
$t'_{j,q}$,  $f_{j,q}$, $s_j$, and $a_{i,k}$, where $k=1,\dots,m_i-1$. We define and name the different classes
of pieces as follows:
\begin{description}
\item[Value] $V_{j,q}:=(f_{j,q}, t_{j,q}, s_j, \ell)$; $V_{j,q}'=(\ell, t'_{j,q}, f_{j,q}, \ell)$.
\item[Satisfying] $S_j:=(s_j,s_j,s_j,s_j)$.
\item[Accordance] If $m_i=1$ then $A_{i,1}:=(\ell,\ell,\ell,\ell)$. Otherwise,
let the $k$\nobreakdash-th occurrence of~$x_i$ be at position~$q_k\in\{1,\dots,3\}$
in clause $c_{j_k}$,
$k=1,\dots,m_i$. Then
define $A_{i,k}: = (t_{j_k,q_k}', a_{i,k-1}, t_{j_k,q_k}, a_{i,k} )$, where
$a_{i,0}:=a_{i,m_{i}}:=\ell$.
\end{description}
This completes the transformation, which is obviously polynomial-time computable.

Rotations
can deactivate up to two colors from a piece, namely those that are facing the border
and thus need not be matched. Active colors must occur an even number of times
in a solved board.
When placing a piece $V_{j,q}$, either
$f_{j,q}$ and $s_j$, or
$t_{j,q}$ and $\ell$ will be deactivated.
Actually, since $f_{j,q}$ only appears in $V_{j,q}$ and $V_{j,q}'$, a solution
requires that these pieces either both deactivate $f_{j,q}$ or that both
activate this color. If they both activate $f_{j,q}$ (which also activates~$s_j$ once), they will be adjacent
in a solution. Later, active $f_{j,q}$ will
model that the $q$\nobreakdash-th variable in~$c_j$ is false, and inactive $f_{j,q}$
will model a true setting. Since $s_j$ appears only in $V_{j,1},V_{j,2},V_{j,3}$ and~$S_j$, but cannot be deactivated
in~$S_j$, where it will be used twice, it is necessary for a solution that an even, positive number of
pieces from $V_{j,1}, V_{j,2}, V_{j,3}$
use~$s_j$.
%
This forces
two of the variables in~$c_j$
to a \eq{false} and one to a \eq{true} setting.

Finally, the interplay of the $t_{j,q}$ and $t_{j,q}'$ within the $V$-
and $A$\nobreakdash-pieces will ensure the consistency of the truth assignment. We fix a variable~$x_i$ and
assume~$m_i>1$ as
there is nothing to show otherwise.
Since color $a_{i,k}$, $k=1,\dots,m_{i}-1$,
is only used in $A_{i,k}$ and $A_{i,k+1}$, these two pieces must either both be rotated to activate
the $t$ and~$t'$ colors on their edges, or the pieces must be adjacent,
which inductively requires that the solution contains all $A_i$\nobreakdash-pieces as a
chain $A_{i,1},A_{i,2},\dots,A_{i,m_i}$
in this or reverse order. In the case of a chain, the colors $t_{j_{k},q_k}$
and $t'_{j_{k},q_k}$ must be inactive for all $k=1,\dots,m_i$, which means
that the pieces $V_{j_{k},q_k}$ and $V'_{j_{k},q_k}$ are adjacent by virtue of color
$f_{j_{k},q_k}$, hence a \eq{false} setting of~$x_i$. Otherwise, the colors
are all active, which is (by the uniqueness of the colors) only possible if $A_{i,k}$ is fit
between the pieces $V_{j_{k},q_k}$ and $V'_{j_{k},q_k}$, both of which are rotated
to a \eq{true} setting. Hence, all $m_i$ occurrences of~$x_i$ must be consistent.

We now prove that there is a solution to \oneinthreesat if and only if there is one
to the puzzle. Let us first prove the implication $\Rightarrow$, \ie,  we consider an assignment
to~$x_1,\dots,x_n$ which satisfies exactly one variable in each clause. We lay down the pieces
as subsequences in clause-wise order. When considering a clause $c_j=(u_{j,1}\vee u_{j,2} \vee u_{j,3}  )$, $j=1,\dots,m$, let $q\in\{1,\dots,3\}$ be the index
of its unique satisfied variable, say this variable is~$x_i$ in its $k$\nobreakdash-th occurence.
The subsequence for~$c_j$ starts with
$V_{j,q}$ rotated by 270\textdegree{} (left-hand side has color~$\ell$), followed by
$A_{i,k}$ unrotated and $V'_{j,q}$ rotated by 90\textdegree{}. Note that the
right-hand side is~$\ell$. Let $q'$ and $q''$ be the indices of the other two
unsatisfied variables in~$c_j$. We proceed by placing $V'_{j,q'}$
rotated by 180\textdegree{}, then $V_{j,q'}$ rotated by 180\textdegree{} and
afterwards $S_j$. The construction for clause $j$ is continued by placing
$V_{j,q''}$ and $V'_{j,q''}$ unrotated, which again ends in color~$\ell$. Note
that we have placed all $V_{j,\cdot}$\nobreakdash-pieces and the $S_{j}$\nobreakdash-piece as well as
 a single $A$\nobreakdash-piece corresponding
to the occurrence of the satisfied variable.
Finally, if they have not been placed before,
we use all $A$\nobreakdash-pieces
corresponding to the unsatisfied variables as follows: If $u_{j,q'}=x_r$
then $A_{r,1},A_{r,2},\dots,A_{r,m_r}$ are appended in the chained
way described above (again
ending in~$\ell$), completed by the chain for variable $u_{j,q''}$.
The construction ends with~$\ell$ on the right-hand side and is continued
with the next clause, resulting in all pieces being used and the puzzle
being solved.

For the implication $\Leftarrow$, assume now that the puzzle is solved.
We have already argued that for any~$j$, there must be exactly
one~$q$ such that the pair $(V_{j,q},V'_{j,q})$
is rotated according a \eq{true} setting
and two other~$q$ such that the pair is in a
\eq{false} setting. We set the variables in~$c_j$ accordingly.
If a variable contains in another clause, we already know that
the corresponding pieces must be rotated in a consistent way,
which proves that we have a solution to \oneinthreesat.
\end{proof}
\section{Boards with at Least Two Rows}\label{sec:multirow}
We consider boards with two rows and arbitrarily many columns. Of course,
the roles of columns and rows can be interchanged.

In \cite{DemaineJigsaw} edge-matching is shown to be NP-complete for quadratic
boards. More recently, \cite{AntoniadisL10} showed in a much more involved proof
that the problem is even APX-complete, already for rectangular
boards with only two rows. We focus on Demaine's and Demaine's technique from
\cite{DemaineJigsaw}, which does not use rotations, and
show that it can be strengthened to also include
boards with row-count two -- by extension edge-matching with any width-height ratio is NP-complete
(with the exception of single-row puzzles),
since it is trivial to force a board to contain more rows by adding uniformly colored pieces.
However,  as stated in Section~\ref{sec:one-row}, edge-matching without rotation is efficiently
solvable when the row-count is 1.

\begin{theorem}
Two-row  edge matching puzzles with swaps and free or monochromatic border is NP-complete.
\end{theorem}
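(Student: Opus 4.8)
The plan is to adapt the square-board reduction of Demaine and Demaine~\cite{DemaineJigsaw} to a board of height two. Membership in NP is immediate: a solution is a placement of the $NM$ pieces, and one verifies in polynomial time that every pair of adjacent edges matches and that the border rule is respected. So the whole argument is about NP-hardness.

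For hardness I would reduce from \oneinthreesat, following the gadget philosophy of~\cite{DemaineJigsaw} but re-laid out in two rows. The guiding idea is that the square-board construction of Demaine and Demaine is essentially one-dimensional ``logic'' wound into a square, so that the pieces of consecutive rows supply the vertical constraints that drive the forcing. If one unwinds it into a single row the vertical constraints vanish and -- consistently with the fact, recalled above, that one-row edge matching with swaps only amounts to finding an Eulerian trail and is therefore polynomial -- the reduction collapses. Two rows are exactly enough to restore those constraints: I would place all the ``active'' gadget pieces (variable gadgets, clause gadgets, and the wires connecting them) in the bottom row, and use the top row to host a chain of ``cap'' pieces whose sole purpose is to realize, via the shared horizontal edge of each column, the vertical color equalities that Demaine and Demaine obtained from the piece directly above. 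Because only swaps are allowed, every piece keeps its orientation, so -- unlike in~\cite{DemaineJigsaw}, whose proof had to neutralize the formally-permitted rotations -- there is nothing extra to suppress. Under the monochromatic border rule the outward-facing edges (the tops of the top row and the bottoms of the bottom row) must all carry the single border color, which I can build directly into the pieces; for the free border rule I would either run the same construction, since a free border is only more permissive there while the interior horizontal and vertical constraints already force everything, or else surround the strip with a polynomial number of extra pieces simulating a monochromatic frame, exactly as noted for the one-row case.

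Concretely, each variable $x_i$ with $m_i$ occurrences gets a short horizontal block in the bottom row encoding one truth value, propagated left to right by a distinguished ``wire'' color, while its cap pieces in the top row force the block to occupy a fixed interval of columns; each clause $c_j$ gets an adjacent block that reads its three incident wires and is locally solvable exactly when one of them is ``true'' and the other two are ``false''. This is where the 1-in-3 condition enters, and it can be enforced by a parity argument on a per-clause color that must appear an even number of times, mirroring the role of $s_j$ in the proof of Theorem~\ref{swaprot:thm}. Since the cap pieces are uniquely colored per column, they can be placed in only one order, and their vertical edges force the bottom-row pieces into the intended left-to-right order (up to global reflection), so swaps cannot scramble the logic. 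One then shows the two directions: from a satisfying 1-in-3 assignment, laying the blocks down in clause order with the prescribed pieces solves the board; conversely, a solved board induces, column by column, a consistent truth value for each variable that satisfies exactly one literal per clause.

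The step I expect to be the real obstacle is position-forcing. In the original square-board proof the rigidity comes from a genuinely two-dimensional frame, and with only two rows one must reconstruct that rigidity from horizontal chaining together with a single layer of vertical links. In particular one has to ensure that no piece -- especially the ``filler'' pieces that exist only to take up space or to simulate the border -- can be swapped into a gadget slot and satisfy the local matching conditions by accident, since in edge matching every piece must be used. Keeping the color alphabets of the different gadgets disjoint enough to rule this out, while keeping the total number of colors polynomial (indeed linear in the instance size, as in Theorem~\ref{swaprot:thm}), is what makes the construction delicate, but I expect it to be routine bookkeeping once the gadget layout is fixed.
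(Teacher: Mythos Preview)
Your approach diverges from the paper's in a fundamental way: the paper reduces from \textsc{3-Partition}, not from \oneinthreesat. This is not merely a stylistic preference. With swaps only and no rotation, every piece carries fixed edge colors, so the sole degree of freedom is \emph{which piece goes where}. \textsc{3-Partition} matches this structure perfectly: each of the $3m$ integers becomes a rigid horizontal ``bar'' of pieces, the top row is a completely forced chain (via unique left--right colors) whose bottom edges carve the lower row into $m$ slots of width~$S$ separated by fixed divider pieces, and the question is simply whether the bars can be distributed among the slots. Every bar must be placed and every slot must be filled --- precisely the ``all pieces used'' constraint of edge matching, with no leftover pieces to account for.

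Your SAT-based sketch, by contrast, has a genuine gap at the point where a truth value is chosen. You write that the cap pieces ``force the bottom-row pieces into the intended left-to-right order'', but if every bottom-row position is forced then there is no freedom left to encode an assignment; and if some positions admit several candidate pieces, you must say where the unchosen candidates go. In the rotation-based proof of Theorem~\ref{swaprot:thm} the choice lives in the orientation of a piece, so this issue never arises; here, with swaps only, you would need something like a pair of interchangeable ``true'' and ``false'' blocks per variable together with a garbage region that absorbs whichever block is not selected --- and that region must be solvable for \emph{every} subset of rejected blocks without accidentally validating illegal combinations. That is the crux of a swap-only SAT reduction, not routine bookkeeping, and your proposal does not address it. Incidentally, the Demaine--Demaine construction you invoke as your template is itself a \textsc{3-Partition} reduction, for exactly this reason.
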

\begin{proof}
The transformation is from 3-partition in which the task is to partition a set of $3m$
positive integers into $m$ sets,
each consisting of $3$ integers such that the integers in each set sum to the same value $S$.
This problem can be visualized as the task of, given a collection of bars of varying length,
placing the bars in rows, such that each row has $3$ bars and all rows have the width $S$.
Importantly 3-partition is also NP-complete when all the integers are limited to values in
the range $(S/4,S/2)$,
meaning that any row with width S must contain exactly $3$ bars~--~this
is the version of the problem used in this proof. The problem remains NP-complete
when $S$ and $m$ are polynomially related.

\begin{center}

Bar:

\begin{tikzpicture}
font=\footnotesize

\piece{1}{1}{1}{$x$}{$\%$}{$\$$}{$ $}
\piece{2}{1}{1}{$x$}{$\%$}{$x$}{$ $}
\piece{3}{1}{1}{$x$}{$\%$}{$x$}{$ $}
\brokenpiece{4}{1}{1}{true}
\piece{5}{1}{1}{$x$}{$\%$}{$x$}{$ $}
\piece{6}{1}{1}{$\$$}{$\%$}{$x$}{$ $}
\end{tikzpicture}

Board structure:

\begin{tikzpicture}
font=\footnotesize

\piece{1}{1}{1}{$a_1$}{$ $}{$a_0$}{$\%$}
\piece{2}{1}{1}{$a_2$}{$ $}{$a_1$}{$\%$}
\brokenpiece{3}{1}{1}{true}
\piece{4}{1}{1}{$b_1$}{$ $}{$b_0$}{$\%$}
\piece{5}{1}{1}{$b_2$}{$ $}{$b_1$}{$@$}  \piece{5}{0}{1}{$\$$}{$@$}{$\$$}{$ $}
\piece{6}{1}{1}{$b_3$}{$ $}{$b_2$}{$\%$}
\brokenpiece{7}{1}{1}{true}
\piece{8}{1}{1}{$c_1$}{$ $}{$c_0$}{$\%$}
\piece{9}{1}{1}{$c_2$}{$ $}{$c_1$}{$@$}  \piece{9}{0}{1}{$\$$}{$@$}{$\$$}{$ $}
\piece{10}{1}{1}{$c_3$}{$ $}{$c_2$}{$\%$}
\brokenpiece{11}{1}{1}{false}

font=\normalsize
\draw[auto] (1, -0.2) -- (1, -0.4); \draw[auto] (5, -0.2) -- (5, -0.4); \draw[auto] (1, -0.3) to node{S} (5, -0.3);
\draw[auto] (6, -0.2) -- (6, -0.4); \draw[auto] (9, -0.2) -- (9, -0.4); \draw[auto] (6, -0.3) to node{S} (9, -0.3);
\draw[auto] (10, -0.2) -- (10, -0.4); \draw[auto] (10, -0.3) to node{...} (11.5, -0.3);
\end{tikzpicture}

\end{center}

Converting this problem into an edge-matching puzzle with height $2$ proceeds as follows:
A section of pieces (a 'bar') is defined for each of the $3m$ integers.
The internal left-right edge-color (shown as 'x' in the figure) is unique for each bar,
and every bar starts and ends on the specific color '\$'.
A board with height $N=2$ and width $M=mS+m-1$  is constructed,
where the upper row is forced to have a particular layout as illustrated in the figure
by using unique colors for every edge pair,
and the lower row is separated into areas of length $S$, each of which can contain $3$ bars.
All separators will fit any bar left and right (color '\$'),
and all bars will fit in any position below the fixed upper row (color '\%').
If the 3-partition has a solution then it will be possible to place the $m$ bars
into the board giving a solution to the edge matching puzzle.
If the edge matching puzzle can be solved it will be because all sections can be
placed into the forced layout of the grid,
meaning that there is a solution to the corresponding 3-partition problem.
\end{proof}

\section{Dependence on the number of colors.}\label{sec:constColor}
In the proof of Theorem~\ref{swaprot:thm} (and other hardness results concerning  puzzles),
the number of colors used is linear in the
the size of the reference problem. It is natural to ask whether the problem
remains hard, when the number of colors is limited to a constant.
We now show that the problem of  Theorem~\ref{swaprot:thm} becomes efficiently solvable if the number of colors
is at most $c(\log(M))^{1/4}$, for $c$ constant.

\begin{theorem}
If both swaps and rotations are allowed, single-row edge-matching
with free border and two colors is solvable in time $O(M)$, where
$M$ is the number of pieces. For $K$ colors the problem is solvable
in time $O(M+5^{K^4})$.
\end{theorem}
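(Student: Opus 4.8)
The plan is to first handle the case $K=2$ directly via a structural/Eulerian argument, and then treat general (but small) $K$ by brute-force over a compact combinatorial description of a solution. For $K=2$, observe that each of the $M$ pieces, up to rotation, is determined by its multiset of four colors from $\{c_0,c_1\}$, so there are only five piece \emph{types} (four $c_0$'s; three $c_0$'s and one $c_1$; two and two; one and three; four $c_1$'s). A solved single-row board is a sequence of pieces in which every internal shared edge matches; since colors facing the border are free, a piece of a given type can present essentially any pair of its colors on its left/right edges (the one subtlety being that a type with, say, a single $c_1$ can show $c_1$ on at most one side). One first counts how many pieces of each type are available — this takes time $O(M)$ — and then decides solvability by a finite case analysis on these five counts: the monochromatic blocks of $c_0$-edges and $c_1$-edges must be stitched together consistently, and the only obstruction is running out of the "mixed" types needed to transition between a $c_0$-run and a $c_1$-run. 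This is a constant-size decision once the five counts are known, giving overall $O(M)$.

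For general $K$, the key idea is that a solution is described by how many pieces of each \emph{type} are placed in each of a bounded number of \emph{slots}, where a type is again a rotation class of a $4$-tuple over $K$ colors — there are at most $\binom{K+3}{4}=O(K^4)$ such types — and where the "slots" record the left/right boundary colors and position within the arrangement. More precisely, I would encode a solution as a walk: think of the colors as vertices of a multigraph and each placed piece as selecting an ordered pair (left color, right color) from among the (at most $\binom42=6$) pairs realizable by its type; a solved board is then a trail using one such edge per piece with consecutive edges sharing the color-vertex, exactly as in the Euler-path correspondence recalled in Section~\ref{sec:one-row}. Because we may permute which piece of a given type is used where, the relevant data is only, for each type $t$ and each realizable ordered color pair $(a,b)$, the number $x_{t,a,b}$ of pieces of type $t$ contributing edge $(a\to b)$. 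The constraints are: (i) for each type $t$, $\sum_{a,b} x_{t,a,b}$ equals the available count of type-$t$ pieces; (ii) the resulting directed multigraph on $K$ vertices has an Eulerian trail, i.e.\ it is connected on its support and in/out-degree imbalances are zero at all but (at most) two vertices. Since $0\le x_{t,a,b}\le M$ but — crucially — the Eulerian-trail condition only depends on $x_{t,a,b}$ through the total multiplicity of each directed color-pair $(a,b)$ modulo parity and through connectivity, one can first guess the degree sequence / imbalance pattern (only $O(5^{K})$-ish many relevant residue/support patterns suffice, and more crudely there are at most $5^{K^4}$ assignments of each $x_{t,a,b}$ to one of a constant number of "regimes"), check feasibility of (i) by a greedy/flow argument, and check (ii) in time polynomial in $K$. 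Reading the instance and computing the type counts is $O(M)$, and the search contributes the claimed $O(5^{K^4})$ term, for total $O(M+5^{K^4})$.

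The step I expect to be the main obstacle is making the reduction from "a full solution (an ordered sequence of $M$ pieces)" down to "a constant-size certificate" rigorous while keeping the count of certificates within $5^{K^4}$. Two things must be argued carefully: first, that the Euler-trail correspondence survives the addition of rotations — one must check that rotating a piece to hide two of its colors against the border is faithfully modelled by allowing that piece to realize any of its color pairs as a directed edge, and that the "cyclic left-right border simulated by free border" device used in Theorem~\ref{swaprot:thm} does not reintroduce unbounded structure here (it does not, since those extra pieces are mono-colored); and second, that it genuinely suffices to record, per type, only an $O(1)$-bounded amount of information rather than the full vector of counts $x_{t,a,b}\in\{0,\dots,M\}$. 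For the latter, the essential observation is that Eulerian-trail existence is monotone/insensitive enough in these multiplicities — extra copies of an edge past the second only matter through parity and never hurt connectivity — so each of the $O(K^4)$ types can be placed into one of a constant number of buckets, yielding at most $5^{K^4}$ relevant certificates; verifying this monotonicity lemma cleanly is where the real work lies, whereas the per-certificate feasibility check (a small transportation problem plus a connectivity/parity test) is routine.
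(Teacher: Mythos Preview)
Your approach is essentially the paper's: model a solved row as an Euler trail in a multigraph on $K$ color-vertices, observe that Euler-trail existence depends on the per-edge multiplicities only through parity and connectivity, cap each type's count at a constant (the paper uses $n_X\le 4$, which gives at most $5$ splits per type), and brute-force the resulting $5^{K^4}$ certificates after an $O(M)$ pass to compute the type counts. The ``monotonicity lemma'' you single out as the crux is exactly what the paper proves.

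One technical slip to fix: a piece $(c_1,c_2,c_3,c_4)$ under rotation does \emph{not} realize up to $\binom{4}{2}=6$ left/right color pairs---it realizes only the two \emph{opposite} pairs $\{c_1,c_3\}$ and $\{c_2,c_4\}$ (each in either order, since $180^\circ$ rotation is available, which is also why the graph should be undirected rather than directed). If you allowed all six pairs your Euler model would be too permissive and soundness would fail; with the correct two options per type the $5^{K^4}$ bound follows exactly as in the paper. Relatedly, a ``type'' must be a cyclic equivalence class of $4$-tuples, not a multiset (e.g.\ $(a,b,a,b)$ and $(a,a,b,b)$ have the same multiset but realize different edge sets), though the count is still $O(K^4)$ either way.
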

The last part of the theorem leads to the following corollary.
\begin{corollary}
If both swaps and rotations are allowed, the single-row edge-matching
problem with free border is fixed-parameter tractable.
\end{corollary}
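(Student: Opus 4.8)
The corollary is an immediate consequence of the preceding theorem, so the plan is simply to verify that the running time the theorem provides matches the definition of fixed-parameter tractability; I take the theorem itself as given and do not re-derive it.

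First I would fix the parameter and the notion of input size. The natural parameter here, as the section title and the statement of the theorem indicate, is $k := K$, the number of colors, and the input size $n$ is governed by the number of pieces $M$: each of the $M$ pieces is specified by four color indices drawn from a palette of size $K+1$, so the instance has size $\Theta(M\log K)$, which is in particular polynomial in $M$. Next I would recall the definition the paper uses (stated in the footnote of the abstract): a problem is fixed-parameter tractable with respect to $k$ if it admits an algorithm running in time $f(k)\cdot p(n)$ for some computable function $f$ and some polynomial $p$.

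The core of the argument is then to read off the running time $O(M + 5^{K^4})$ from the theorem and split it into the contribution $5^{K^4}$, which depends on the parameter $K$ alone and is a computable (indeed explicit) function of it, and the contribution $O(M)$, which is linear in $M$ and hence polynomial in the input size $n$ and free of any dependence on $K$. Setting $f(K) := 5^{K^4}$ and $p(n) := n$ exhibits the running time in the additive form $f(K) + O(p(n))$, and this is precisely the shape required by fixed-parameter tractability.

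There is no genuine obstacle here, but one step deserves a remark rather than being taken for granted: the definition above is usually phrased with the product $f(k)\cdot p(n)$, whereas the theorem delivers the sum $f(K)+O(M)$. This is harmless, since for $a,b \ge 1$ one has $a+b \le 2ab$, so the additive bound is dominated by a multiplicative bound of the same shape, and the two formulations of FPT therefore coincide. The only further point I would state explicitly is that $f(K) = 5^{K^4}$, although it grows faster than a simple exponential in $K$, is still a computable function of the parameter alone, which is all the definition requires; with this the corollary follows directly.
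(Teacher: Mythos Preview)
Your proposal is correct and matches the paper's approach: the paper simply states that the corollary follows from the last part of the preceding theorem, and you have spelled out exactly why the bound $O(M+5^{K^4})$ fits the FPT template with parameter~$K$. If anything, your write-up is more detailed than the paper's, which leaves the verification implicit.
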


\begin{proof} (Of the theorem)
We start by exemplifying the proof method by looking at
the case of one and two colors, black ($b$) and white ($w$).

If there is only one color, the puzzle is solvable, by placing
the pieces in a row.

With exactly two colors, the pieces can be classified as either
\emph{constants} or \emph
{switches}. Constants are pieces
that have the same color on the left and right edge, regardless of the rotation.
Switches can be rotated such that
there are different colors on the left and right edge.

\begin{center}
\begin{tikzpicture}
\piececolor{0}{0}{1.5}{white}{white}{white}{white}
\piececolor{2}{0}{1.5}{black}{black}{black}{black}
\piececolor{4}{0}{1.5}{white}{black}{white}{black}

\piececolor{7}{0}{1.5}{black}{white}{white}{white}
\piececolor{9}{0}{1.5}{white}{black}{black}{black}
\piececolor{11}{0}{1.5}{white}{white}{black}{black}

\settext{3}{-0.8}{Constants}
\settext{10}{-0.8}{Switches}
\end{tikzpicture}
\end{center}

If the puzzle consists of constants only it can be solved if and only if
it does not contain both a completely black and a completely white piece.

Otherwise there is both a monochromatic black piece and a white one.
We claim that the puzzle has a solution if and only if there is at least
one switch. Consider the multi-graph modeling as defined in Section~\ref{sec:one-row}. 
Recall that an Euler path corresponds to a solution when swaps and flips
are allowed, but not general rotations. We now show, how this method can 
be used also in the case of few colors when also rotations are allow.
For two colors the graph
has two nodes $w$ and $b$, one for each color.
Consider the constant pieces first. The monochromatic ones give rise
to loops at the node corresponding to their color.
The non-monochromatic pieces give rise to a loop on one of the nodes.
Which node this is depends on the rotation. In absence of a switch piece
there is no edge between the nodes, hence no Euler path and no solution.

If the puzzle contains at least one switch, then this can be rotated in
such a way that it forms an edge $\{b,w\}$ between the two nodes. Regardless
how the other pieces are rotated, the resulting graph has an Euler path.
One starts at, say, $w$ and traverses all loops at that node, then one uses
one edge $\{w,b\}$ to get to node $b$. Then one traverses all loops at
$b$. Finally one uses the remaining $\{w,b\}$-edges (if any) exactly once.
The Euler path corresponds to a solution as described above.

\newcommand{\makenodesize}[5]{\node (#5) at (#1,#2)  [circle,draw=black,minimum size=#3] {#4}; }

Consider the case of a puzzle with $M$ pieces and $K$ colors.
Let $X = abcd$ be a \emph{color scheme}, i.e.,
a counter-clockwise order of four (not necessarily distinct)
colors. We assume that the cyclic order of the color scheme
is such that it is lexicographically minimal.
This is to avoid treating an
$abcd$ and $bcda$ as different color schemes.

Every arrangement of the
$M$ pieces gives rise to a multi-graph.
The graph has $K$ nodes and one edge for every
piece. A piece with color scheme $abcd$ will correspond to an edge $\{a,c\}$ if $a$
and $c$ are the colors on the vertical edges. Otherwise the piece gives rise
to an edge $\{b,d\}$. Let $m_{a,b}$ denote the number of edges between nodes $a$ and $b$.
Loops ($a=b$) are allowed, however they do not have any influence in the solvability
of the puzzle, as discussed in the example for two colors.

Given an arrangement of the  $M$  pieces,
let $n_{X}$ be the number of pieces having color scheme $X=abcd$.
Let $n_{X,ac}$ denote the number of pieces with color scheme $X$
which are rotated to become an $\{a,c\}$-edge. Similar $n_{X,bd}$.
We make the convention that $n_{X,ac} = 0$ if $X$ does not contain
the colors $a$ and $c$ on opposing edges. Then
\begin{equation}
n_{X} = n_{X,ac} + n_{X,bd} \label{constraint1:eqn}
\end{equation}
and
\begin{equation}
m_{ac} = \sum_X n_{X,ac} \label{constraint2:eqn}
\end{equation}
Furthermore
\begin{displaymath}
deg(a) = \sum_{b\colon b > a} m_{ab} + \sum_{c\colon c < a}m_{ca} + m_{aa}
\end{displaymath}

For the graph to have an Euler cycle the $m_{xy}$ have to be chosen
such that they obey the constraints~(\ref{constraint1:eqn}) and~(\ref{constraint2:eqn})
and  all degrees are even.
For the graph to have an Euler path the $m_{xy}$ have to be chosen
such that they obey the constraints~(\ref{constraint1:eqn}) and~(\ref{constraint2:eqn})
and exactly two nodes have odd degrees.

As we are only interested in connectivity and parity of degrees,
we replace the $n_{X}$ as follows (that is we remove pieces with
color scheme $X$):
\begin{itemize}
  \item If $n_{X} \le 4$ it is unchanged.
  \item If $n_{X} > 4$ and even then set $n_{X} = 4$.
  \item If $n_{X} > 4$ and odd then set $n_{X} = 3$.
\end{itemize}

We claim that this does not change the fact whether the puzzle is solvable
or not.
Assume  $n_{X} > 4$ and is even. The number of pieces rotated such that 
they produce an $\{a,c\}$-edge is $n_{X,ac}$, number of pieces rotated such that
they produce an $\{b,d\}$-edge is $n_{X,bd}$ and $n_{X} = n_{X,ac} + n_{X,bd}$.

Assume that the puzzle is solvable. Then there is a way to choose $n_{X,ab}$ and
$n_{X,bd}$
such that the resulting graph has an
Euler cycle or  path, especially it is connected.
We only discuss the case of an Euler cycle, the arguments for
a path are identical.
Then the degrees of all nodes, especially $a,b,c,d$
are even. As $n_{X}$ is reduced by an even number but remains at least $3$,
we can reduce both $ n_{X,ac}$ and $n_{X,bd}$ (that is remove the
corresponding edges) by an even number each,
without being forced to make one of them zero.
If some $ n_{X,ac}$ is zero in the beginning, it will remain unchanged, therefore
we only consider $ n_{X,ac} > 0$ in the following.
The resulting graph is connected and
all degrees are even. Hence it still has an Euler cycle.

Now assume that the puzzle is not solvable. Then there is no way to
choose the numbers $n_{X,ab}$ and $n_{X,bd}$
such that the resulting graph has an
Euler cycle. That is the graph is un-connected or has nodes
with odd degree for any such choice. If there is no way to choose
$n_{X,ab}$ and $n_{X,bd}$ to make the
graph connected, then reducing the number of pieces maintains that property.

Assume therefore that for some choices of the numbers $n_{X,ab}$ and $n_{X,bd}$
the graph is connected, but always has nodes with odd degree.

Consider a given partition of the $n_{X} = n_{X,ac} + n_{X,bd}$.
If $n_{X}$ is odd then one of  $n_{X,ac}$ and  $n_{X,bd}$ is odd
the other is even. Replacing $n_{X}$ by $3$ (and changing
$n_{X,ac}$ and $n_{X,bd}$ to satisfy~\refeq{constraint1:eqn}) maintains this property.
Thus it is not possible to introduce a new distribution of parities
of the nodes (for example one with only even node degrees).

If $n_{X}$ is even then $n_{X,ac}$ and  $n_{X,bd}$ are either
both even of both odd. Replacing $n_{X}$ by $4$
and changing
$n_{X,ac}$ and $n_{X,bd}$ to satisfy~\refeq{constraint1:eqn}) maintains this property.
If $n_{X,ac}$ and  $n_{X,bd}$ were both even and positive,
 we set both to $2$. If both were odd and positive, we set
them to $3$ and $1$. Again it is not possible to introduce a new distribution of parities
of the nodes (for example one with only even node degrees).

Hence reducing the $n_{X}$ as described still allows the
graph to be connected and does not allow the graph to have only vertices of even degree.

Now, check all possibilities to partition the reduced $n_{X}$ into
$n_{X,ac}$, $n_{X,bd}$ and compute the $m_{xy}$. For each partition check
whether it leads to a degree sequence ensuring an Euler cycle.

Each $n_{X}$ can be partitioned into $n_{X,ac}$ and  $n_{X,bd}$
in $4$, or $5$ ways, depending on it being $3$
or $4$; these are the possibilities 
for the latter case $(0,4)$, $(1,3)$, $(2,2)$, $(3,1)$, and $(4,0)$.

Hence the number of possibilities to try is at most $5^{\ell}$,
where $\ell$ is the number of color schemes. The quantity
$\ell$ can be upper-bounded by $K^4$, where $K$ is the number of colors.
For each partition it has to be checked, whether the resulting degree
sequence is Eulerian.
Computing the $n_{X}$ is linear in the number of pieces.

Hence the total time for deciding a puzzle with
$M$ pieces and $K$ colors is
\[
O(M+5^{K^4})
\]
which is linear for constant $K$, and polynomial for $K=c(\log(M))^{1/4}$, for $c$ constant.
\end{proof}

The last result shows that the number of colors seems to play an essential role for complexity 
of puzzle problems. The following result supports this by showing that also puzzles 
with in-place rotations can be efficiently solved when the number of colors and rows is 
constant. As mentioned in section the unrestricted problem is NP-complete, see ~\cite{gr97}.
This resembles fixed-parameter tractability, but requires two parameters to be constant.

\begin{theorem}
Puzzles with $N$ rows, $M$ columns, and $K$ colors
and in-place rotations are solvable in
polynomial time for constant $N$ and $K$.
\end{theorem}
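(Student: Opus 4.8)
The plan is to use dynamic programming over the columns, where a state captures everything about a partial placement that matters for extending it to the right. Since we have $N$ rows, $K$ colors, and only in-place rotations (so every piece keeps its position in its column but may be turned by a multiple of $90^\circ$), the natural state after processing the first $j$ columns consists of two ingredients: (i) the vector of colors currently exposed on the right-hand edges of the $N$ pieces in column $j$ — there are at most $K^N$ such vectors — and (ii) a record of which pieces from the global multiset have already been used. Ingredient (ii) is the only subtle part: there are $NM$ pieces, so naively the ``used set'' is exponential. The fix is to note that pieces are indistinguishable except through their \emph{edge-permutation class} (here: their multiset of four colors, of which there are at most $O(K^4)$ types, a constant), so the used-pieces information can be compressed to a vector of nonnegative integers, one per type, each bounded by $NM$. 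That still gives polynomially many \emph{distinct reachable} count-vectors per column, but the number of count-vectors overall is $(NM+1)^{O(K^4)}$, which is polynomial in $M$ for constant $N$ and $K$. Hence the whole state space has size $K^N \cdot (NM+1)^{O(K^4)} = M^{O(1)}$.

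First I would make precise the notion of a \emph{column pattern}: an assignment, to each of the $N$ cells of a column, of a piece \emph{type} together with one of four rotations, subject to the internal vertical-matching constraints (bottom color of cell $r$ equals top color of cell $r+1$) and the monochrome/free border conditions at the top and bottom cell. For fixed $N$ and $K$ there are only constantly many column patterns — at most $(4K^4)^N$ — and each can be enumerated once and for all. Next I would define the DP table $T[j]$ indexed by (right-edge color vector, used-count vector) and holding a boolean ``reachable''. The transition from column $j$ to column $j+1$ picks a column pattern whose left-edge color vector matches the stored right-edge color vector, checks that adding that pattern's type-counts to the stored used-count vector does not exceed the available supply of any type, and writes the resulting state into $T[j+1]$. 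Initialization sets the reachable states of column $1$ to all column patterns (their left edges are border edges, handled by the border rule). The instance is solvable iff some state in $T[M]$ has a right-edge color vector consistent with the border rule \emph{and} a used-count vector equal to the full supply vector.

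The correctness argument is the routine two-directional induction: any solved board, read column by column, produces a sequence of column patterns that is precisely a path through the DP; conversely any DP path can be realized as a placement because the stored state records exactly the two interfaces (vertical-adjacency colors to the right, and global inventory) that constrain the continuation, and nothing else about columns $1,\dots,j$ can matter. The running-time bound is: $M$ columns, times $|T[\cdot]| = K^N (NM+1)^{O(K^4)}$ states, times $(4K^4)^N$ transitions per state, times $O(N+K^4)$ work per transition to check matching and update counts — altogether polynomial in $M$ once $N$ and $K$ are constants. For the cyclic border rule one additionally fixes the color vector exposed on the \emph{left} border of column $1$ at the start (there are at most $K^N$ choices, still constant) and requires the final right-edge vector to equal it; everything else is unchanged.

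The main obstacle — and the one place the argument must be handled carefully — is the compression of the ``used pieces'' information to something polynomial-sized. One must verify that collapsing identical piece types into counts loses no information: two pieces with the same four-color multiset are genuinely interchangeable in any placement precisely because in-place rotation is the only manipulation, so a piece is fully described by its color multiset and the four rotations generate all edge assignments compatible with that multiset. (If, say, flips were also allowed the type would be a different invariant, but the same counting idea would still apply.) Once that observation is in place, bounding the number of types by a constant $O(K^4)$ and each count by $NM$ gives the polynomial state bound, and the rest is bookkeeping.
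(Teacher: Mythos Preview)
Your dynamic program solves the wrong decision problem. In the statement, ``in-place rotations'' means that every piece is \emph{fixed at its grid position} and the only freedom is which of the four rotations it takes; there are no swaps. You even paraphrase this correctly (``every piece keeps its position''), but then your column pattern is defined as ``an assignment, to each of the $N$ cells of a column, of a piece \emph{type} together with one of four rotations,'' and you carry a used-count vector to make sure every type is consumed the right number of times. That is precisely the swap-plus-rotation problem: you are allowing the algorithm to choose which piece goes into which cell. A ``yes'' from your DP therefore certifies only that \emph{some} rearrangement of the given pieces solves the board, not that the board is solvable with the pieces where they already sit. The two questions have different answers on the same instance, so your algorithm does not decide the problem in the theorem.

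Once you drop the spurious freedom, the whole inventory component of the state disappears: for column~$j$ the pieces are fixed, so there are at most $4^N$ rotation configurations of \emph{that specific} column, each inducing a left and a right $N$-color vector (after discarding those that fail the internal horizontal matches). The only information that needs to cross from column~$j$ to column~$j+1$ is the right-edge color vector, of which there are at most $K^N$. This is exactly the paper's argument: build a layered digraph with $K^N$ nodes per layer, add an edge in layer~$j$ for every internally consistent rotation configuration of column~$j$, glue layers on matching color vectors, and test $s$--$t$ reachability. The running time is $O\bigl(M(4^N+K^N)\bigr)$, with no $(NM+1)^{O(K^4)}$ factor.

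A separate, smaller error: your claim that ``a piece is fully described by its color multiset and the four rotations generate all edge assignments compatible with that multiset'' is false. The pieces $(1,2,1,3)$ and $(1,1,2,3)$ share the multiset $\{1,1,2,3\}$ but are not rotations of one another; the correct invariant is the \emph{cyclic} equivalence class of the $4$-tuple, not the multiset. This would matter if you really were solving the swap version, though the type count is still $O(K^4)$ after the fix. For the actual in-place problem it is irrelevant, since no type abstraction is needed at all.
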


\begin{proof} 
Consider a column in the puzzle. For each of the $N$ pieces in the column,
there are 4 possible rotations, giving $4^N$ configurations for the column.
An \emph{$N$-color pattern} is a sequence $[c_1,\ldots,c_N]$ of $N$ colors.
Given $K$ colors there are at most $K^N$ $N$-color patterns, which we number
$1,\ldots,K^N$.
For each column there are at most $4^N$ possible color patterns at the left
and at the right edge of the column.
Consider column $j$ and construct a bipartite graph $G_j$ as follows.
The vertex set is $V = V_{j,\mathrm{left}}\cup V_{j, \mathrm{right}}$
where $V_{\mathrm{left}}$ and $V_{\mathrm{right}}$ each contain $K^N$ vertices, one for
each color pattern.
Let $V_{j,\mathrm{left}}=\{v_{j,1},\ldots,v_{j,K^N}\}$
and  $V_{j,\mathrm{right}}=\{w_{j,1},\ldots,w_{j,K^N}\}$.
We then try all $4^N$ configurations of the pieces in column~$j$.
If a configuration gives color pattern $a$ at left and pattern $b$ at right
and the colors at the $N-1$ horizontal edges match
then we  add a directed edge $(v_{j,a},w_{j,b})$ to the graph. Let $G_j$ denote
the resulting bipartite graph.
The construction of $G_j$ can be performed in time $O((4^N)+(K^N))$ given that
the index of the color pattern can be computed in time $\Theta(K+N)$.
Graph $G_j$ has $2K^N$ nodes and at most $4^N$ edges.

Next we connect the consecutive graphs $G_j$, $G_{j+1}$, $j=1,\ldots,M-1$ by
adding the directed edges $(w_{j,a},v_{j+1,a})$, $a=1\ldots,K^N$. Finally we add
a source node $s$ and connect it to all left nodes of $G_1$ by directed edges
$(s,v_{1,a})$. We also add a sink node $t$ and connect
it to all right nodes of $G_M$ by directed edges
$(w_{M,a},t)$. Let $G=(V,E)$ be the resulting graph. Note that
$\norm{V} = 2MK^N + 2$ and $\norm{E} \le M4^N+(M-1)K^N+2K^N$
and that $G$ can be constructed in time $O(M(K^M+4^N))$.

Now, compute whether there is directed path from $s$ to $t$ in $G$.
We claim, that if so, there is solution for the puzzle otherwise there is not.
Assume there is a directed path from $s$ to $t$.
Let $(s,v_{1,a_1})$ be the first edge of the path and $(w_{n,b_n},t)$ be the last one.
In between there are alternating edges inside the $G_j$ and in between the $G_j$.
This means that for every $j=1,\ldots,n-1$ the path contains
an edge $(w_{j,c_j},v_{j+1,c_j})$ for some color pattern $c_j$. Hence there
are matching color patterns between all rows.
Within every $G_j$ the path uses an edge $(v_{j,a_j},w_{j,b_j})$. That is,
the  color patterns at the left  and at the right of column $j$ can be
realized simultaneously by an appropriate configuration of the pieces in that column.
Hence the puzzle has a solution. As finding the shortest path can be done
in time linear in the graph size, the total running time is polynomial in 
for constant  $N$ and $K$.

Conversely, any solution of the puzzle gives rise to at least one path from $s$ to
$t$ by the definition of the graph.

\end{proof}

\section{Summary and  Conclusions}
In this paper, we have focused on several problem aspects of edge-color puzzles
whose impact on complexity was unknown or only marginally treated
before, in particular
rotating pieces, in-place pieces, border rules, number of colors
and shapes
of pieces. With regard to single-row boards, it
has been shown that introducing rotations makes the
otherwise easy problem NP-hard. In basically all hardness results,
a large number of colors is used, usually linear in the number of pieces.
For two cases we have shown that the problem is fixed-parameter tractable,
i.e., it becomes efficiently solvable when the number
of colors is constant or logarithmic in the number of pieces.

The table below summarizes the known hardness results for boards with one row
compared to boards with at least two rows.

\begin{center}
\begin{tabular}{|c||c|c|}
\hline
Rows: & 1 & $\geq2$\tabularnewline
\hline
\hline
Swap & P & NP-complete\tabularnewline
\hline
Rotation & P & NP-complete\tabularnewline
\hline
Both & NP-complete, fixed-parameter tractable & NP-complete\tabularnewline
\hline
\end{tabular}
\end{center}

This paper has raised further questions
concerning the hardness
of edge-matching problems. For example, are there more hard problems
which are fixed-parameter tractable?  Furthermore,
it is unknown whether the single-row case with rotations is
hard to approximate.

\section*{Acknowledgement}

The second author gratefully acknowledges support by the
DTU's Corrit travel grant.




  \bibliographystyle{elsarticle-num}{}
  \bibliography{edgematching}

%



\end{document}